\newtheorem{theorem}{Theorem}[section]
\newtheorem{corollary}[]{Corollary}[section]
\DeclareMathOperator*{\rank}{rank}
\DeclareMathOperator*{\im}{im}
\DeclareMathOperator*{\coker}{coker}
\newcommand\restr[2]{{
  \left.\kern-\nulldelimiterspace 
  #1 
  \vphantom{\big|} 
  \right|_{#2} 
  }}
\newcommand\Autoref[1]{\@first@ref#1,@}
\def\@throw@dot#1.#2@{#1}
\def\@set@refname#1{
    \edef\@tmp{\getrefbykeydefault{#1}{anchor}{}}%
    \def\@refname{\@nameuse{\expandafter\@throw@dot\@tmp.@autorefname}s}%
}
\def\@first@ref#1,#2{%
  \ifx#2@\autoref{#1}\let\@nextref\@gobble
  \else%
    \@set@refname{#1}
    \@refname~\ref{#1}
    \let\@nextref\@next@ref
  \fi%
  \@nextref#2%
}
\def\@next@ref#1,#2{%
   \ifx#2@ and~\ref{#1}\let\@nextref\@gobble
   \else, \ref{#1}
   \fi%
   \@nextref#2%
}
\title{SMSSVD -- SubMatrix Selection Singular Value Decomposition}
\author[1,2]{Rasmus Henningsson}
\author[1,2,3,4]{Magnus Fontes\thanks{fontes@maths.lth.se; Corresponding author}}
\affil[1]{{\small The Centre for Mathematical Sciences, Lund University, Sweden}}
\affil[2]{{\small The International Group for Data Analysis, Institut Pasteur, Paris, France}}
\affil[3]{{\small The Center for Genomic Medicine, Rigshospitalet, Copenhagen, Denmark}}
\affil[4]{{\small Persimune, The Centre of Excellence for Personalized Medicine, Copenhagen, Denmark}}
\begin{document}

\maketitle

\begin{abstract}
High throughput biomedical measurements normally capture multiple overlaid biologically relevant signals and often also signals representing different types of technical artefacts like e.g. batch effects.
Signal identification and decomposition are accordingly main objectives in statistical biomedical modeling and data analysis.
Existing methods, aimed at signal reconstruction and deconvolution, in general, are either supervised, contain parameters that need to be estimated or present other types of ad hoc features.
We here introduce SubMatrix Selection SingularValue Decomposition (SMSSVD), a parameter-free unsupervised signal decomposition and dimension reduction method, designed to reduce noise, adaptively for each low-rank-signal in a given data matrix, and represent the signals in the data in a way that enable unbiased exploratory analysis and reconstruction of multiple overlaid signals, including identifying groups of variables that drive different signals.

The Submatrix Selection Singular Value Decomposition (SMSSVD) method produces a denoised signal decomposition from a given data matrix.
The SMSSVD method guarantees orthogonality between signal components in a straightforward manner and it is designed to make automation possible.
We illustrate SMSSVD by applying it to several real and synthetic datasets and compare its performance to golden standard methods like PCA (Principal Component Analysis) and SPC (Sparse Principal Components, using Lasso constraints).
The SMSSVD is computationally efficient and despite being a parameter-free method, in general, outperforms existing statistical learning methods. 

A Julia implementation of SMSSVD is openly available on GitHub (\href{https://github.com/rasmushenningsson/SMSSVD.jl}{https://github.com/rasmushenningsson/SMSSVD.jl}).
\end{abstract}

\section{Introduction}

High throughput biomedical measurements, by design, normally capture multiple overlaid biologically relevant signals, but often also signals representing different types of biological and technical artefacts like e.g. batch effects.
There exist different methods aimed at signal reconstruction and deconvolution of the resulting high dimensional and complex datasets, but these methods almost always contain parameters that need to be estimated or present other types of ad hoc features.
Developed specifically for Omics data and more particularly gene expression data such methods include the gene shaving method \cite{hastie2000gene}, tree harvesting \cite{hastie2001supervised}, supervised principal components \cite{bair2004semi} and amplified marginal eigenvector regression \cite{ding2017predicting}.
They employ widely different strategies do deal with the ubiquitous $P\gg N$ (many more variables than samples) problem in omics data.
Gene Shaving uses the first principal component to iteratively guide variable selection towards progressively smaller nested subsets of correlated genes with large variances.
An optimal subset size is then chosen using the `gap statistic', a measure of how much better the subset is than what is expected by random chance.
To find additional subsets (signals), each gene is first projected onto the orthogonal complement of the average gene in the current subset, and the whole process is repeated.

We here introduce SubMatrix Selection Singular Value Decomposition (SMS\-SVD), a parameter-free unsupervised dimension reduction technique primarily designed to reduce noise, adaptively for each low-rank-signal in a data matrix, and represent the data in a way that enable unbiased exploratory analysis and reconstruction of the multiple overlaid signals, including finding the variables that drive the different signals.

Our first observation for the theoretical foundation of SMSSVD is that the SVD of a linear map restricted to a hyperplane (linear subspace) share many properties with the SVD of the corresponding unrestricted linear map.
Using this we show that, by iteratively choosing orthogonal hyperplanes based on criteria for optimal variable selection and concatenating the decompositions, we can construct a denoised decomposition of the data matrix.
The SMSSVD method guarantees orthogonality between components in a straightforward manner and coincide with the SVD if no variable selection is applied.
We illustrate the SMSSVD by applying it to several real and synthetic datasets and compare its performance to golden standard methods for unsupervised exploratory analysis: Classical PCA (Principal Component Analysis) \cite{hotelling1933analysis} and the lasso or elastic net based methods like SPC (Sparse Principal Components) \cite{witten2009penalized}.
Just like PCA and SPC, SMSSVD is intended for use in wide range of situtations, and no assumptions specific to gene expression analysis are made in the derivation of the method.
The SMSSVD is computationally efficient and despite being a parameter-free method, in general, it outperforms or equals the performance of the golden standard methods. A Julia implementation of SMSSVD is openly available on GitHub.

\section{Methods}

\begin{theorem}
\label{thmSubspaceSVD}
Let $\restr{X}{\Pi}: \Pi\to X(\Pi)$ be the restriction of a linear map $X:\mathbb{R}^N \to \mathbb{R}^P$ to a $d$-dimensional subspace $\Pi\subset\mathbb{R}^N$ such that $\Pi\perp\ker X$.
Furthermore, let $U\Sigma V^T = \sum_{i=1}^d\sigma_iU_{\cdot i}V_{\cdot i}^T$ be the singular value decomposition of $\restr{X}{\Pi}$.
Then
\begin{enumerate}
	\item $V_{\cdot i} \perp \ker X, \;\forall i$.
	\item $U_{\cdot i} \perp \coker X, \;\forall i$.
	\item $XV = U\Sigma$.
	\item $U^TX = \Sigma V^T + U^TX(I-VV^T)$.
	\item $(I-UU^T)X(I-VV^T) = (I-UU^T)X$.
	\item $\rank X = d + \rank (I-UU^T)X$.
	\item $\rank\left(X\right) = d + \rank\left((I-UU^T)X\right)$.
\end{enumerate}
\end{theorem}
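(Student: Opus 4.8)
The plan is to deduce the rank identity from a single application of the rank--nullity theorem, using only the geometry of the singular value decomposition of $\restr{X}{\Pi}$. The first thing I would record is that the left singular vectors span $X(\Pi)$ and that $UU^T$ is the orthogonal projector onto it. Indeed, $\Pi\perp\ker X$ forces $\Pi\cap\ker X=\{0\}$, so $\restr{X}{\Pi}$ is injective and $\dim X(\Pi)=d$; since $\sigma_i U_{\cdot i}=XV_{\cdot i}\in X(\Pi)$ for every $i$ (by item 3, together with $V_{\cdot i}\in\Pi$) and the $U_{\cdot i}$ are $d$ orthonormal vectors with $\sigma_i\neq 0$, they form an orthonormal basis of $X(\Pi)$. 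Consequently $UU^T$ is the orthogonal projection of $\mathbb{R}^P$ onto $X(\Pi)$ and $I-UU^T$ is the projection onto $X(\Pi)^\perp$.

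Next I would consider the linear map $\phi\colon\im X\to X(\Pi)^\perp$ given by restricting $I-UU^T$ to $\im X$. Its image is $\im\phi=(I-UU^T)(\im X)=\im\bigl((I-UU^T)X\bigr)$, so $\dim\im\phi=\rank\bigl((I-UU^T)X\bigr)$. The decisive step is identifying $\ker\phi$: a vector $y\in\im X$ is killed by $I-UU^T$ exactly when $y\in X(\Pi)$, and since $X(\Pi)\subseteq\im X$ this yields $\ker\phi=\im X\cap X(\Pi)=X(\Pi)$, a subspace of dimension $d$.

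Applying the rank--nullity theorem to $\phi$ then gives $\rank X=\dim\im X=\dim\ker\phi+\dim\im\phi=d+\rank\bigl((I-UU^T)X\bigr)$, which is the claimed identity. I expect the only delicate point to be the kernel computation, and specifically the inclusion $X(\Pi)\subseteq\im X$ together with the equality $\dim X(\Pi)=d$: this is precisely where the hypothesis $\Pi\perp\ker X$ (via injectivity of $\restr{X}{\Pi}$, cf.\ item 1) is genuinely used. It is worth emphasizing that the naive decomposition $X=UU^TX+(I-UU^T)X$ into pieces with orthogonal column spaces does \emph{not} by itself force the ranks to add --- that would fail for a general orthogonal projector --- so the containment $X(\Pi)\subseteq\im X$ is essential, and the rank--nullity formulation above is the cleanest way to exploit it.
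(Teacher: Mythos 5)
Your argument for the rank identity is correct, and it is in substance the same as the paper's: the paper decomposes $\im X = Y \oplus Z$ with $Y = X(\Pi)$ and observes that $I-UU^T$ kills $Y$ and acts as the identity on the complement $Z$, which is exactly the kernel/image split of your map $\phi$; your rank--nullity formulation just packages that decomposition (and, incidentally, avoids the paper's slight imprecision of writing the complement as a quotient $\im X / X(\Pi)$ while treating it as the orthogonal complement inside $\im X$). Your preliminary paragraph, showing that the columns of $U$ are an orthonormal basis of $X(\Pi)$ so that $UU^T$ is the orthogonal projector onto it, is also correct and matches what the paper asserts in its proof of parts 2 and 6. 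Your closing observation --- that for a general orthogonal projector the ranks of $UU^TX$ and $(I-UU^T)X$ need not add, and that the containment $X(\Pi)\subseteq\im X$ together with $\dim X(\Pi)=d$ is what makes them add here --- correctly isolates the crux, which is also where the paper's proof quietly uses $Y\subset\im X$.

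The gap is one of scope: the statement has seven parts, and your proposal proves only parts 6 and 7 (which are identical up to parenthesis sizing). Parts 1 and 3 are invoked as inputs (``by item 3'', ``cf.\ item 1''), which is a legitimate ordering of the proof --- the paper itself proves the items in sequence and uses part 3 to get parts 4 and 5 --- but parts 1 through 5 still need their own arguments, and the proposal supplies none. They are all short: part 1 holds because the columns of $V$ are an orthonormal basis of $\Pi\perp\ker X$; part 2 because the columns of $U$ span $X(\Pi)\perp\coker X$; part 3 is $XV=\restr{X}{\Pi}V=U\Sigma V^TV=U\Sigma$; part 4 follows by splitting $U^TX=U^TXVV^T+U^TX(I-VV^T)$ and applying part 3; and part 5 follows from $(I-UU^T)XV=(I-UU^T)U\Sigma=\bm{0}$. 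As written, the proposal is a complete and correct proof of the rank identity only, not of the full theorem.
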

Remark. {\textit In the statement of the theorem and in the proof below, we consider all vectors to belong to the full-dimensional spaces.
In particular, we extend all vectors in subspaces of the full spaces with zero in the orthogonal complements.}
\begin{proof}
\textit{1.} The columns of $V$ are an orthonormal basis of $\Pi$ and thus orthogonal to $\ker X$.
\textit{2.} The columns of $U$ are an orthonormal basis of $X(\Pi)$ and $X(\Pi) \perp \coker X$.
\textit{3.} $XV=\restr{X}{\Pi}V=U\Sigma V^TV=U\Sigma$.
\textit{4.} Using \textit{3} we get
\begin{align*}
U^TX &= U^TXVV^T + U^TX(I-VV^T)\\
     &= \Sigma V^T + U^TX(I-VV^T).
\end{align*}
\textit{5.} The statement follows from $(I-UU^T)XV = (I-UU^T)U\Sigma = \bm{0}$, where we have used that $U^TU=I$.
\textit{6.}
Let $Y\coloneqq X(\Pi)$ and $Z\coloneqq \im X / X(\Pi)$ be the parts of the decomposition $\im X = Y \oplus Z$, which is possible since $Y\subset \im X$.
The linear map $(I-UU^T)$ is orthogonal projection onto $X(\Pi)^\perp$ and thus maps $Y \to 0$ and $Z \to Z$.
Since $\rank A = \dim\left(\im A\right)$, it follows immediately that $\rank (I-UU^T)X = \dim Z$ and that $\rank X = \dim Y + \dim Z = d + \dim Z$.
\end{proof}
Note that $V^TV$ is the orthogonal projection on $\Pi$ and $U^TU$ is the orthogonal projection on $X(\Pi)$.
If $\Pi$ is spanned by the right singular vectors corresponding to the $d$ largest singular values of $X$, then $U\Sigma V^T$ is the truncated SVD which by the \textit{Eckhart-Young Theorem} is the closest rank $d$ matrix to $X$ in Frobenius and Spectral norms.
Furthermore, if $\Pi=(\ker X)^\perp$, then $d = \rank X$ and $U\Sigma V^T$ is the SVD of $X$ (without expanding $U$ and $V$ to orthonormal matrices).
Also note that for these two cases, property \textit{4} takes a simpler form, $U^TX=\Sigma V^T$ (symmetric to property \textit{3}), but the residual $U^TX(I-VV^T)$ is nonzero in general.

\autoref{thmSubspaceSVD} concerns the relationship between $X$ and $U\Sigma V^T$ and shows that many important properties that hold for the (truncated) SVD are retained regardless how the subspace $\Pi$ is chosen.
The results from \autoref{thmSubspaceSVD} are put into practice in this iterative algorithm.
Let $X_1 \coloneqq X$ and repeat the following steps for $k=1,2,\hdots$
\begin{enumerate}
	\item Choose $\Pi_k$.
	\item Compute $U_k\Sigma_kV_k^T$ from $\restr{X_k}{\Pi_k}$.
	\item Let $X_{k+1} \coloneqq (I-U_kU_k^T)X_k$.
\end{enumerate}
The iterations can continue as long as $X_k$ is nonzero or until some other stopping criteria is met.
Finally, the results are concatenated:
\begin{align}
U\Sigma V^T &\coloneqq
\begin{pmatrix}U_1 & U_2 & \hdots & U_n\end{pmatrix}
\begin{pmatrix} \Sigma_1 &          &         &          \\
                         & \Sigma_2 &         &          \\
                         &          & \ddots  &          \\
                         &          &         & \Sigma_n \end{pmatrix}
\begin{pmatrix}V_1^T \\ V_2^T \\ \vdots \\ V_n^T\end{pmatrix}\\
&= \sum_{k=1}^n U_k\Sigma_kV_k^T.
\end{align}
Orthogonality between columns within each $U_k$ and $V_k$ respectively follow immediately from the definition.
Step 3 above, together with \autoref{thmSubspaceSVD}, property \textit{2}, guarantees orthogonality between the columns of different $U_k$'s, since the columns of $U_k$ are in $\coker X_l$, for all $l>k$.
Similarly, properties \textit{5} and \textit{1} of \autoref{thmSubspaceSVD} imply orthogonality between the columns of different $V_k$'s.
That is, $U^TU=V^TV=I$.
The diagonal entries of each $\Sigma_k$ are decreasing, but the algorithm above does not ensure any structure between the blocks.
In practice however, with each $\Pi_k$ chosen to capture a strong signal in $X_k$, we can expect the SMS singular values to be decreasing, or at least close to decreasing.

The rank decreases by $d_k$ in each iteration, that is $\rank X_k = d_k + \rank X_{k+1}$, which follows from property \textit{6} in \autoref{thmSubspaceSVD}.
This implies that $\rank U\Sigma V^T = \rank X$ if the iterations are run all the way until $X_k=0$.
In general, $U\Sigma V^T \neq X$, with equality iff the residual $U_k^TX_k(I-V_kV_k^T)=0$ for all $k$.
Indeed, if equality holds, then $U^TX-\Sigma V^T=0$.
Step 3 of the algorithm above now implies that $U_k^TX=U_k^TX_k$ and \autoref{thmSubspaceSVD}, property \textit{4}, yields
\begin{align}
U^TX-\Sigma V^T =
\begin{pmatrix}
U_1^TX_1(I-V_1V_1^T) \\
U_2^TX_2(I-V_2V_2^T) \\
\vdots\\
U_n^TX_n(I-V_nV_n^T)
\end{pmatrix}.
\end{align}

To adaptively reduce noise, $\Pi$ must depend on $X$.
Our motivating example is to use $\Pi$ for selecting a subset of the variables that are likely to be less influenced by noise.
This is a special case of choosing $\Pi$ after performing a linear transform of the variables, which is described in the following theorem:
\begin{theorem}
\label{thmSMSSVD}
Take a linear map $S: \mathbb{R}^L \to \mathbb{R}^P$ and an integer $d$ such that $\rank S^TX\geq d$ and let $\tilde{U}\tilde{\Sigma}\tilde{V}^T$ be the rank $d$ truncated SVD of $S^TX$.
Furthermore let $\Pi$ be the subspace spanned by the columns of $\tilde{V}$ and let $U\Sigma V^T$ be the SVD of $\restr{X}{\Pi}$.
Then
\begin{enumerate}
	\item $\Pi \perp \ker X$.
	\item $S^TU\Sigma V^T = \tilde{U}\tilde{\Sigma}\tilde{V}^T$.
	\item $\{V_{\cdot 1}, V_{\cdot 2}, \hdots, V_{\cdot d} \}$ and $\{\tilde{V}_{\cdot 1}, \tilde{V}_{\cdot 2}, \hdots, \tilde{V}_{\cdot d} \}$ are orthonormal bases of $\Pi$.
	\item $\{S^TU_{\cdot 1}, S^TU_{\cdot 2}, \hdots, S^TU_{\cdot d} \}$ and $\{\tilde{U}_{\cdot 1}, \tilde{U}_{\cdot 2}, \hdots, \tilde{U}_{\cdot d} \}$ are bases of $S^TX(\Pi)$.
	\item $\|\Sigma\|_F \geq \frac{\|\tilde{\Sigma}\|_F}{\|S\|_2}$.
	\item $U^TX = \Sigma V^T + U^T(I-SS^T)X(I-VV^T)$.
\end{enumerate}
\end{theorem}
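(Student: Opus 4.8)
The plan is to reduce the identity to property 4 of \autoref{thmSubspaceSVD} and then show that the difference between the two residual terms vanishes. Since item 1 establishes $\Pi \perp \ker X$, the restriction $\restr{X}{\Pi}$ meets the hypothesis of \autoref{thmSubspaceSVD}, whose property 4 gives $U^TX = \Sigma V^T + U^TX(I-VV^T)$. Subtracting this from the target identity, it suffices to prove that the cross term vanishes, i.e.
\[
U^T SS^T X(I-VV^T) = \bm{0}.
\]

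First I would rewrite the factor $S^TX(I-VV^T)$ using the truncated SVD of $S^TX$. Both $\{V_{\cdot i}\}$ and $\{\tilde V_{\cdot i}\}$ are orthonormal bases of $\Pi$ (item 3), so $VV^T = \tilde V\tilde V^T$ is the orthogonal projection onto $\Pi$, and since the columns of $\tilde V$ are the top $d$ right singular vectors of $S^TX$ we get $S^TX\,VV^T = S^TX\,\tilde V\tilde V^T = \tilde U\tilde\Sigma\tilde V^T$. Writing the full SVD as $S^TX = \tilde U\tilde\Sigma\tilde V^T + R$, where $R$ collects the trailing singular triplets, this yields $S^TX(I-VV^T) = R$. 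The column space of $R$ is spanned by the trailing left singular vectors of $S^TX$ and is therefore orthogonal to $\operatorname{span}\{\tilde U_{\cdot i}\}$.

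Next I would identify the column space of $S^TU$. By item 4, $\{S^TU_{\cdot i}\}$ and $\{\tilde U_{\cdot i}\}$ are both bases of $S^TX(\Pi)$, so the columns of $S^TU$ lie in $\operatorname{span}\{\tilde U_{\cdot i}\}$. Writing the cross term as $U^TSS^TX(I-VV^T) = (S^TU)^T R$, every entry is an inner product between a column of $S^TU$ (in the top singular subspace of $S^TX$) and a column of $R$ (in the complementary trailing subspace). These two subspaces are orthogonal, so the product is $\bm{0}$, which is exactly what is needed to pass from property 4 of \autoref{thmSubspaceSVD} to item 6.

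The main obstacle is the bookkeeping that confirms $VV^T = \tilde V\tilde V^T$ and that $S^TU$ and the residual $R$ land in orthogonal left-singular subspaces of $S^TX$; once these two facts are secured the conclusion follows from a one-line orthogonality argument rather than any direct computation. Care is needed because $U$, $V$, $\tilde U$, and $\tilde V$ inhabit four different ambient spaces, so I would track carefully which projection acts on which side and verify that the truncated-SVD splitting of $S^TX$ is the one compatible with the projection $VV^T$ onto $\Pi$.
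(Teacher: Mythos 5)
Your argument for item \emph{6} is mathematically correct and is essentially the paper's own proof of that item: both reduce the identity to property \emph{4} of \autoref{thmSubspaceSVD} and then show that the cross term $U^TSS^TX(I-VV^T)$ vanishes, using item \emph{3} (so that $VV^T=\tilde{V}\tilde{V}^T$) and item \emph{4} (so that the columns of $S^TU$ lie in $\operatorname{span}\{\tilde{U}_{\cdot i}\}$). The only difference is cosmetic: the paper writes $S^TU=\tilde{U}Z$ and computes $Z^T\tilde{U}^TS^TX(I-VV^T)=Z^T\tilde{\Sigma}\tilde{V}^T(I-\tilde{V}\tilde{V}^T)=\bm{0}$, whereas you split $S^TX=\tilde{U}\tilde{\Sigma}\tilde{V}^T+R$ and appeal to the orthogonality between the leading and trailing left singular subspaces; these are the same observation, since $\tilde{U}^TS^TX=\tilde{\Sigma}\tilde{V}^T$ is precisely the statement $\tilde{U}^TR=\bm{0}$.

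The genuine gap is coverage: the statement consists of six claims, and your proposal proves only the last one. Items \emph{1}, \emph{3} and \emph{4} are invoked as established facts, but nothing in your proposal establishes them, and your item-\emph{6} argument depends on all three (item \emph{1} to place $\restr{X}{\Pi}$ under the hypotheses of \autoref{thmSubspaceSVD}, item \emph{3} for $VV^T=\tilde{V}\tilde{V}^T$, item \emph{4} for the location of the columns of $S^TU$); items \emph{2} and \emph{5} are never addressed. None of these is difficult, and the paper disposes of them in a few lines: item \emph{1} follows from the fact that the columns of $\tilde{V}$ are orthogonal to $\ker S^TX\supset\ker X$; item \emph{2} from $S^TU\Sigma V^T=S^T\restr{X}{\Pi}=\restr{(S^TX)}{\Pi}=\tilde{U}\tilde{\Sigma}\tilde{V}^T$; item \emph{3} from the definitions; item \emph{4} from item \emph{2} (which gives $\tilde{U}=S^TU\Sigma V^T\tilde{V}\tilde{\Sigma}^{-1}$, so the columns of $S^TU$ span the $\tilde{U}_{\cdot i}$) together with a rank count; and item \emph{5} from $\|AB\|_F\leq\|A\|_2\|B\|_F$ applied with $A=S^T$, $B=U\Sigma V^T$ via item \emph{2}. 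Until these are written down, what you have is a correct proof of one sixth of the theorem that rests on unproven parts of the remaining five sixths.
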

\begin{proof}
\textit{1.} The columns of $\tilde{V}$ are orthogonal to $\ker S^TX \supset \ker X$.
\textit{2.} $S^TU\Sigma V^T = S^T\restr{X}{\Pi} = \restr{(S^TX)}{\Pi} = \tilde{U}\tilde{\Sigma}\tilde{V}^T$.
\textit{3.} Follows immediately from the definitions.
\textit{4.} $\{\tilde{U}_{\cdot i}\}_{i=1}^d$ is a basis of $S^TX(\Pi)$.
By property \textit{2}, $\tilde{U} = S^TU\Sigma V^T\tilde{V}\tilde{\Sigma}^{-1}$, showing that $\{S^TU_{\cdot i}\}_{i=1}^d$ span $\{\tilde{U}_{\cdot i}\}_{i=1}^d$.
Finally, since $U$ and $\tilde{U}$ have the same rank, $\{U_{\cdot i}\}_{i=1}^d$ is also a basis of $S^TX(\Pi)$.
\textit{5.} For general matrices $A$ and $B$, consider $A$ acting on each column of $B$. 
We get
\[
\|AB\|_F^2=\sum_i \|AB_{\cdot i}\|_2^2 \leq \sum_i \|A\|_2^2\|B_{\cdot i}\|_2^2 = \|A\|_2^2\|B\|_F^2.
\]
The result now follows from property \textit{2}, with $A = S^T$ and $B = U\Sigma V^T$, since $\|AB\|_F=\|\tilde{U}\tilde{\Sigma}\tilde{V}^T\|_F=\|\tilde{\Sigma}\|_F$ and $\|B\|_F=\|\Sigma\|_F$.
\textit{6.} From \autoref{thmSubspaceSVD}, property \textit{4}, we get $U^TX = \Sigma V^T + U^TX(I-VV^T)$.
It remains to show that $U^TSS^TX(I-VV^T)=\bm{0}$.
By property \textit{4}, there exists a matrix $Z$ such that $S^TU=\tilde{U}Z$ and 
\begin{align}
U^TSS^TX(I-VV^T) &= Z^T\tilde{U}^TS^TX(I-VV^T)\\
&= Z^T\tilde{\Sigma}\tilde{V}^T(I-\tilde{V}\tilde{V}^T) = \bm{0},
\end{align}
where $VV^T=\tilde{V}\tilde{V}^T$ because of property \textit{3}.
\end{proof}

\begin{corollary}
If $S^TS=I$, then $\|\Sigma\|_F\geq\|\tilde{\Sigma}\|_F$.
\end{corollary}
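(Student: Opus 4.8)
The plan is to reduce the corollary directly to property 5 of \autoref{thmSMSSVD} by showing that the hypothesis $S^TS=I$ forces $\|S\|_2=1$. First I would recall that the spectral norm $\|S\|_2$ equals the largest singular value of $S$, and that the singular values of $S$ are the nonnegative square roots of the eigenvalues of $S^TS$. Under the assumption $S^TS=I$, every eigenvalue of $S^TS$ equals $1$, so every singular value of $S$ equals $1$; in particular the largest one satisfies $\|S\|_2=1$.

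With this identification in hand, I would simply substitute $\|S\|_2=1$ into the inequality $\|\Sigma\|_F \geq \|\tilde{\Sigma}\|_F / \|S\|_2$ supplied by property 5, which immediately collapses to $\|\Sigma\|_F\geq\|\tilde{\Sigma}\|_F$, the desired conclusion.

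I do not anticipate any genuine obstacle here, as the corollary is essentially a specialization of the preceding inequality. The only step requiring a moment of care is the identification $\|S\|_2=1$, which is the standard consequence of $S$ having orthonormal columns. One can note in passing that $S^TS=I$ (an $L\times L$ identity) presupposes $L\leq P$, so that $S:\mathbb{R}^L\to\mathbb{R}^P$ can indeed carry $L$ orthonormal columns in $\mathbb{R}^P$; this is consistent with the setup of \autoref{thmSMSSVD} and plays no further role in the argument.
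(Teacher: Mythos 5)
Your proof is correct and matches the paper's intent exactly: the corollary is stated as an immediate consequence of \autoref{thmSMSSVD}, property \textit{5}, with the only needed observation being that $S^TS=I$ forces $\|S\|_2=1$. The paper leaves this step implicit, and your careful justification of it via the singular values of $S$ is precisely the right filling-in.
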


Another way to interpret $S$ is that $SS^T$ defines a (possibly degenerate) inner product on the sample space, which is used to find $\Pi$.
To see this, let $d=\rank S^TX$ so that $\tilde{U}\tilde{\Sigma}\tilde{V}^T=S^TX$ and $K \coloneqq X^TSS^TX = \tilde{V}\tilde{\Sigma}^2\tilde{V}^T$, showing the well-known result that $\tilde{V}\tilde{\Sigma}^2\tilde{V}^T$ is an eigendecomposition of $K$, where $K_{ij} = \langle x_i, x_j \rangle \coloneqq X_{\cdot i}^TSS^TX_{\cdot j}$ is the inner product of sample $i$ and $j$.
This naturally extends to kernel PCA, where $K$ is defined by taking scalar products after an (implicit) mapping to a higher-dimensional space.
Any method that results in a low-dimensional sample space representation can indeed be used, since $\Pi$ is spanned by the columns of $V$ by definition.
We will not pursue these extensions here.

The Projection Score \cite{fontes2011projection} provides a natural optimality criterion for $S$ and $d$ (and thus $\Pi$) needed in each iteration of the SMSSVD algorithm.
It is a measure of how informative a specific variable subset is, when constructing a rank $d$ approximation of a data matrix.
A common application is to maximize the Projection Score over a sequence of variable subsets, where each subset consists of those variables that have a variance above a specific threshold.
Using the notation from \autoref{thmSMSSVD}, the optimal variable subset describes a matrix $S$ and the optimal low-rank approximation is $\tilde{U}\tilde{\Sigma}\tilde{V}^T$.
Here $S$ has exactly one element in each column equal to $1$, at most one element in each row equal to $1$ and all other elements equal to zero.
Hence $S^TX$ corresponds to selecting a subset of the variables of a data matrix $X$ and $S^TS=I$.
In iteration $k$ of the SMSSVD algorithm, we optimize the Projection Score jointly over the variance filtering threshold and the dimension, which gives both an optimal variable subset $S_k$ and a simple dimension estimate $d_k$ of the signal that was captured.

\section{Results}
The performance of SMSSVD is evaluated in comparison to SVD and SPC (Sparse Principal Components), a method similar to SVD, but with an additional lasso ($L_1$) constraint to achieve sparsity \cite{witten2009penalized}.
The methods are evaluated both for real data using three Gene Expression data sets and for synthetic data where the ground truth is known.

\subsection{Gene Expression Data}
Three Gene Expression data sets, two openly available with microarray data and one based on RNA-Seq available upon request from the original authors, were analyzed.
Gene expression microarray profiles from a study of breast cancer \cite{chin2006genomic} was previously used to evaluate SPC \cite{witten2009penalized}, but in contrast to their analysis, we use all $118$ samples and all $22215$ genes.
Each sample was labeled as one of five breast cancer subtypes: `basal-like', `luminal A', `luminal B', `ERBB2', and `normal breast-like'.
In a study of pediatric Acute Lymphoblastic Leukemia (ALL), gene expression profiles were measured for $132$ diagnostic samples \cite{ross2003classification}.
The samples were labeled by prognostic leukemia subtype (`TEL-AML1', `BCR-ABL', `MLL', `Hyperdiploid (>50)', `E2A-PBX1', `T-ALL' and `Other').
Our final data set is from another pediatric ALL study, where gene expression profiling was done from RNA-Seq data for $195$ samples \cite{lilljebjorn2016identification}.
The samples were aligned with Tophat2 \cite{kim2013tophat2} and gene expression levels were normalized by TMM \cite{robinson2010scaling}.
Only genes with a support of at least $10$ reads in at least $2$ samples were kept.
The annotated subtypes in this data set were `BCR-ABL1', `ETV6-RUNX1', `High hyperdiploid', `MLL', `TCF3-PBX1' and `Other'.
Here, `Other' is a very diverse group containing everything that did not fit in first five categories.
We thus present results both with and without this group included.

The ability to extract relevant information from the gene expression data sets was evaluated for each model by how well they could explain the subtypes, using the Akaike Information Criterion (AIC) for model scoring.
Given the low-dimensional sample representations from SMSSVD, SVD or SPC (for different values of the sparsity parameter, $c$), a Gaussian Mixture Model was constructed by fitting one Multivariate Gaussian per subtype.
The class priors were chosen proportional to the size of each subtype.
The loglikelihood $l \coloneqq \log P(\bm{x}|\theta, M)$, where $\bm{x}$ are the subtype labels, $M$ is the model and $\theta$ a vector of $k$ fitted model parameters is used to compute the $\text{AIC} = 2k-2l$.
\autoref{figCancers} displays the AIC scores for the different models as a function of the model dimension.
SMSSVD generally performs better than SVD, by a margin.
Comparison with SPC is trickier, since the performance of SPC is determined by the sparsity parameter $c$ and there is no simple objective way to choose $c$.
However, SMSSVD compares well with SPC regardless of the value of the parameter.

\begin{figure}[!htpb]
  \centering
  \includegraphics[trim=0mm 5mm 0mm 2mm, clip, scale=0.4]{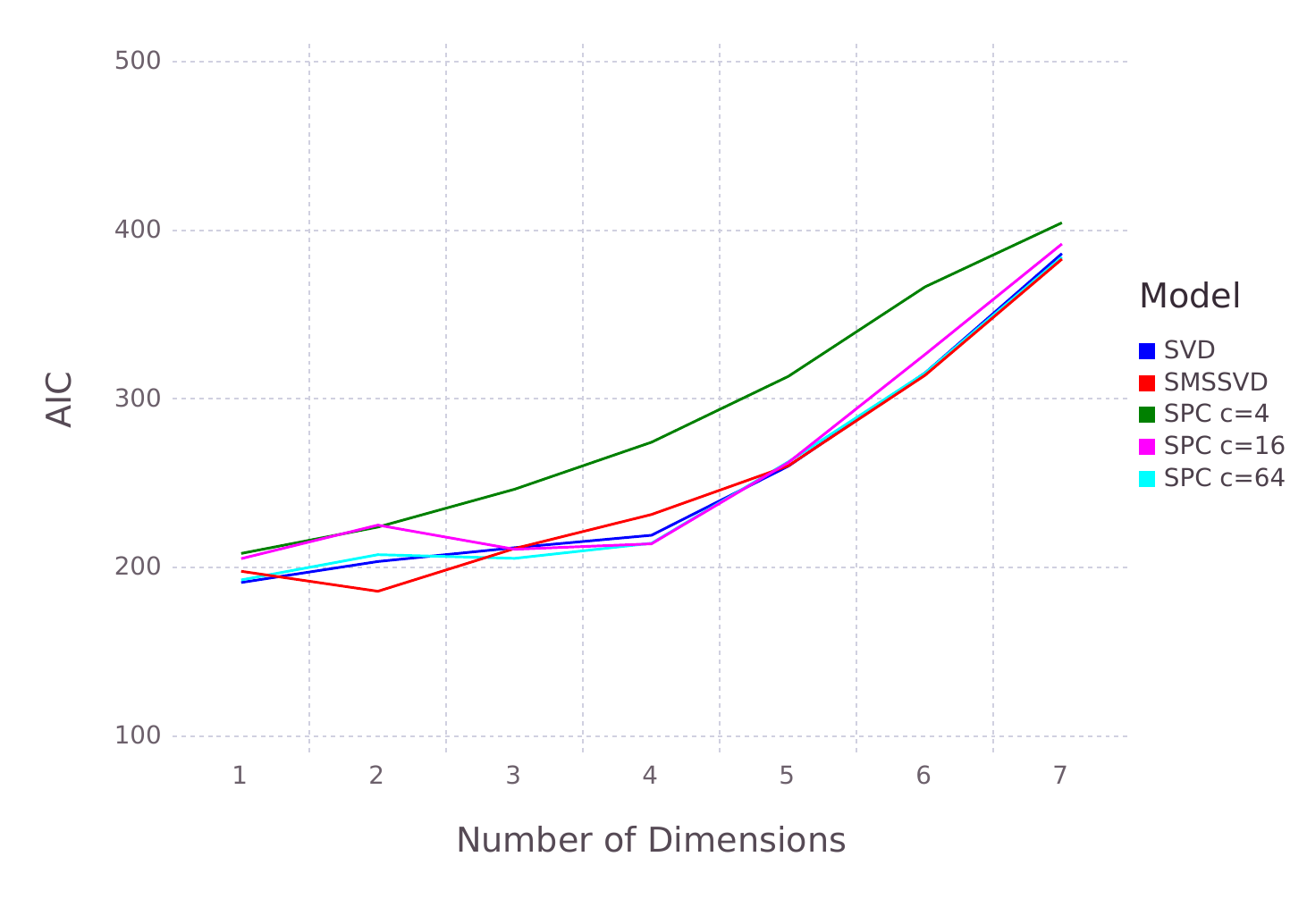}
  \includegraphics[trim=0mm 5mm 0mm 2mm, clip, scale=0.4]{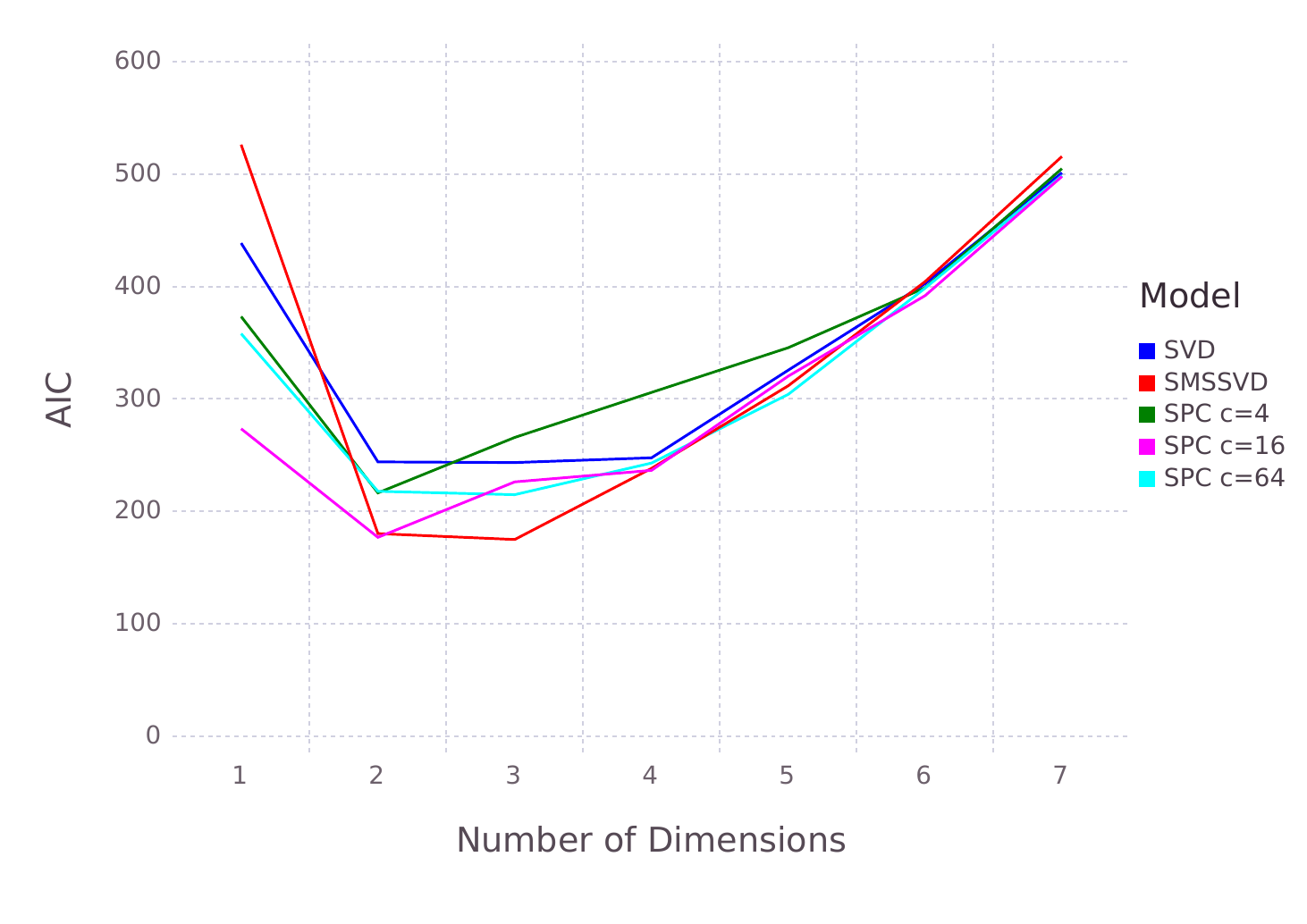}
  \includegraphics[trim=0mm 5mm 0mm 2mm, clip, scale=0.4]{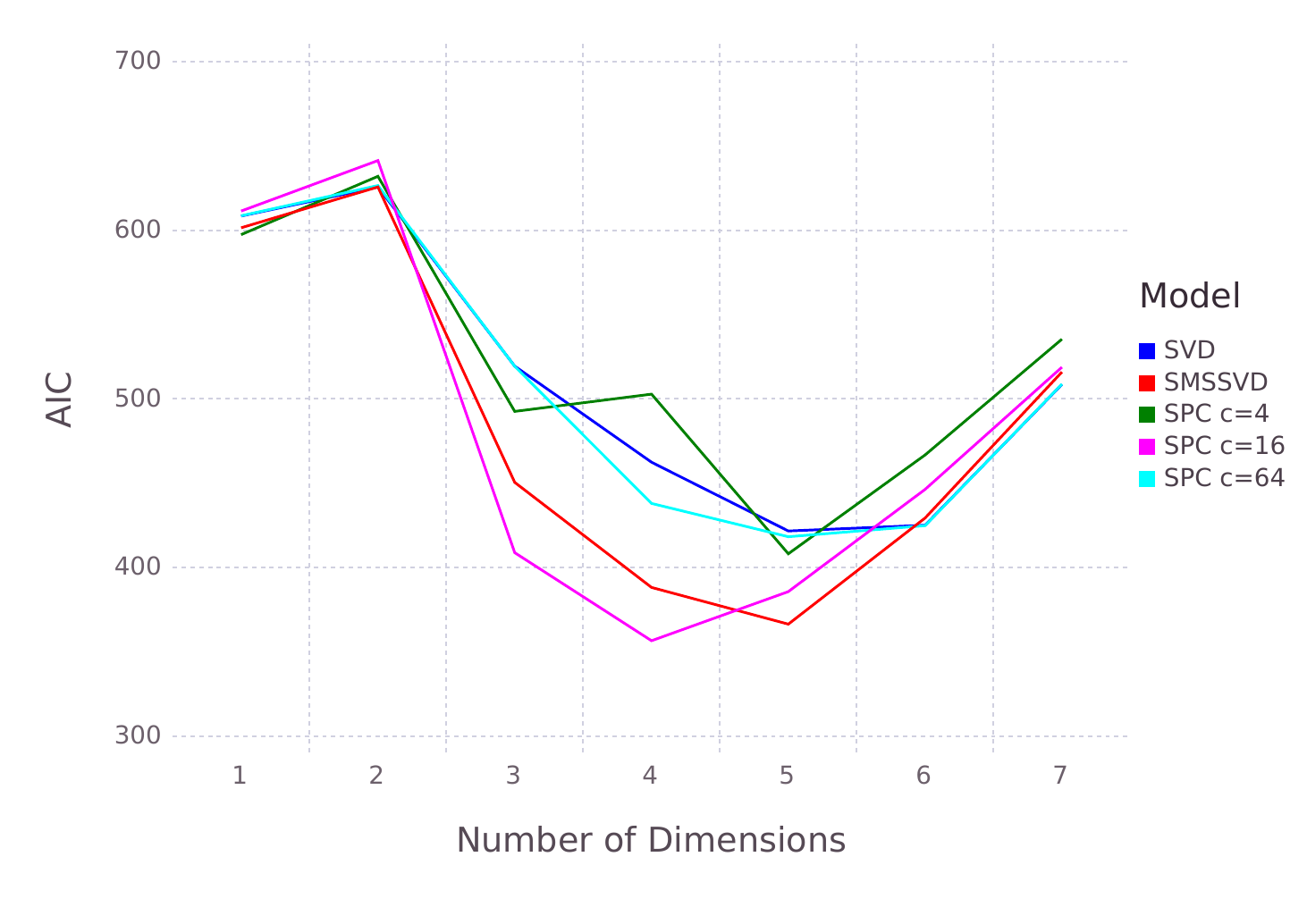}
  \includegraphics[trim=0mm 5mm 0mm 2mm, clip, scale=0.4]{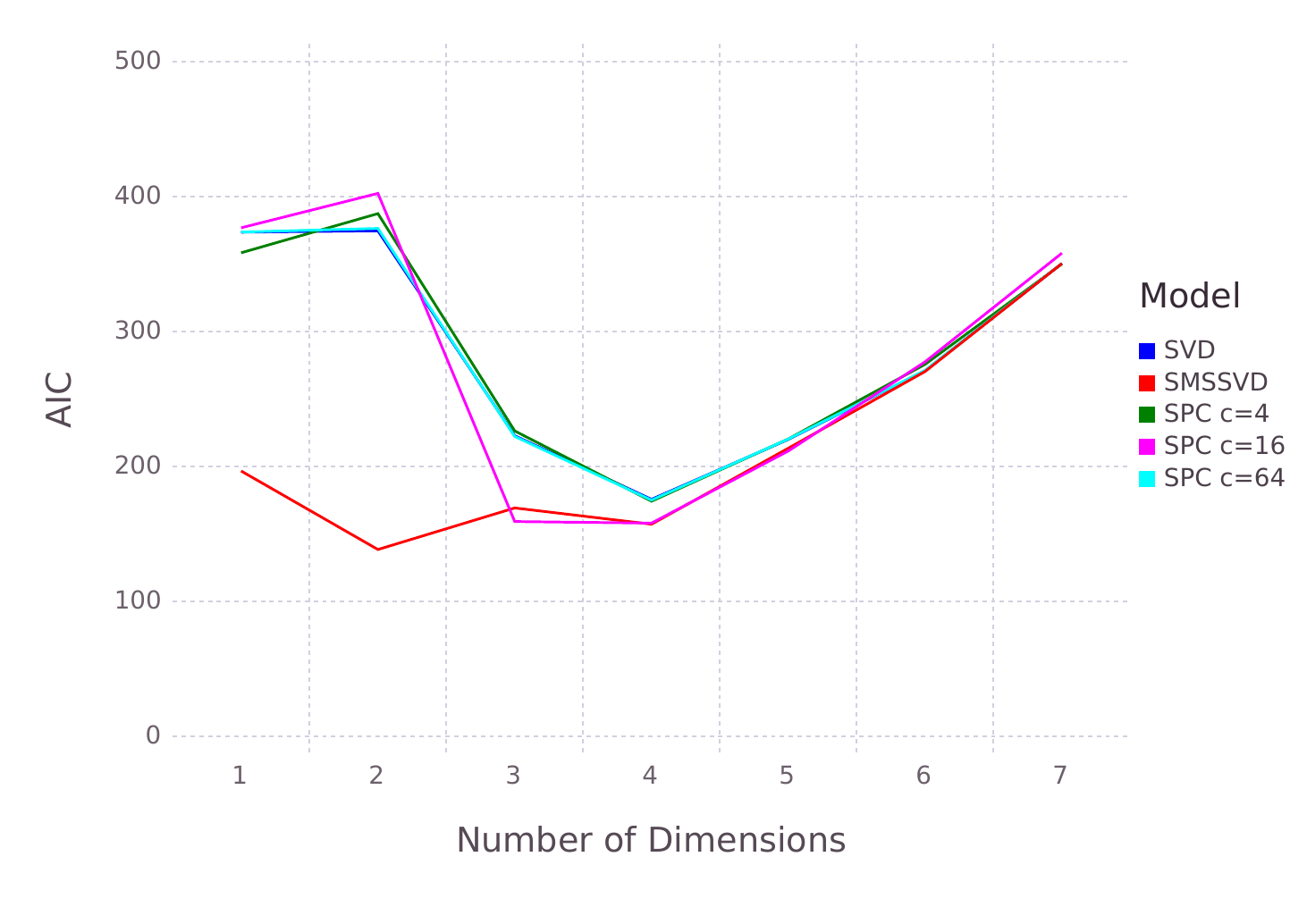}
  \caption{Evaluation of SMSSVD on different data sets, based on AIC scores when fitting a Gaussian Mixture Model to the subtypes. From top to bottom: A. Breast Cancer, B. Acute Lymphoblastic Leukemia (Microarray), C. Acute Lymphoblastic Leukemia (RNA-Seq), D. Acute Lymphoblastic Leukemia (RNA-Seq) with subtype `Other' removed.}
  \label{figCancers}
\end{figure}

\subsection{Synthetic Data}

SMSSVD decomposes a matrix observed in noisy conditions as a series of orthogonal low-rank signals.
The aim is to get a stable representation of the samples and then recover as much as possible of the variables, even for signals that are heavily corrupted by noise.
To evaluate SMSSVD, we synthetically create a series of low-rank signals $Y_k$ that are orthogonal (i.e. $Y_i^TY_j=0$ and $Y_iY_j^T=0$ for $i\neq j$) and that has a chosen level of sparsity on the variable side and try to recover the individual $Y_k$'s from the observed matrix $X\coloneqq\sum_kY_k + \varepsilon$ where $\varepsilon$ is a matrix and $\varepsilon_{ij} \sim \mathcal{N}(0,\sigma_{ij})$.
To measure how well SMSSVD recovers the signals from the data, we look at each signal separately, considering only variables where the signal has support.
Let $\text{err}(k)$ be the reconstruction error of signal $k$, 
\[
\text{err}(k) \coloneqq \|R_k^T(Y_k - \hat{Y}_k)\|_F,
\]
where $\hat{Y}_k$ is the reconstructed signal and $R_k$ is defined such that multiplying with $R_k^T$ from the left selects the variables (rows) where $Y_k$ is nonzero. 

While SMSSVD is designed to find $d$-dimensional signals ($\hat{Y}_k\coloneqq U_k\Sigma_kV_k^T$), the same is not true for SVD and SPC.
To test the ability to find the signals, rather than the ability to find them in the right order, the components are reordered using a algorithm that tries to minimize the total error by greedily matching the rank $1$ matrices from the decomposition to signals $Y_k$, always picking the match that lowers the total error the most.
The number of rank $1$ matrices matched to each signal $Y_k$ is equal to $\rank Y_k$.
Note that with no noise present, SVD is guaranteed to always find the optimal decomposition.

The biplots in \autoref{figSMSSVDBiplots} illustrate how SMSSVD works and how the signal reconstructions compares to other methods.
If there is no noise, perfect decompositions are achieved by all methods apart from SPC with a high degree of sparsity.
An artificial example where the noise is only added to the non-signal variables highlights that SMSSVD can still perfectly reconstruct both samples and signal variables, whereas the other methods display significant defects.
Finally, when all variables are affected by noise, SMSSVD still get the best results.

\begin{figure*}[!htpb]
	\centering
	\includegraphics[scale=0.4]{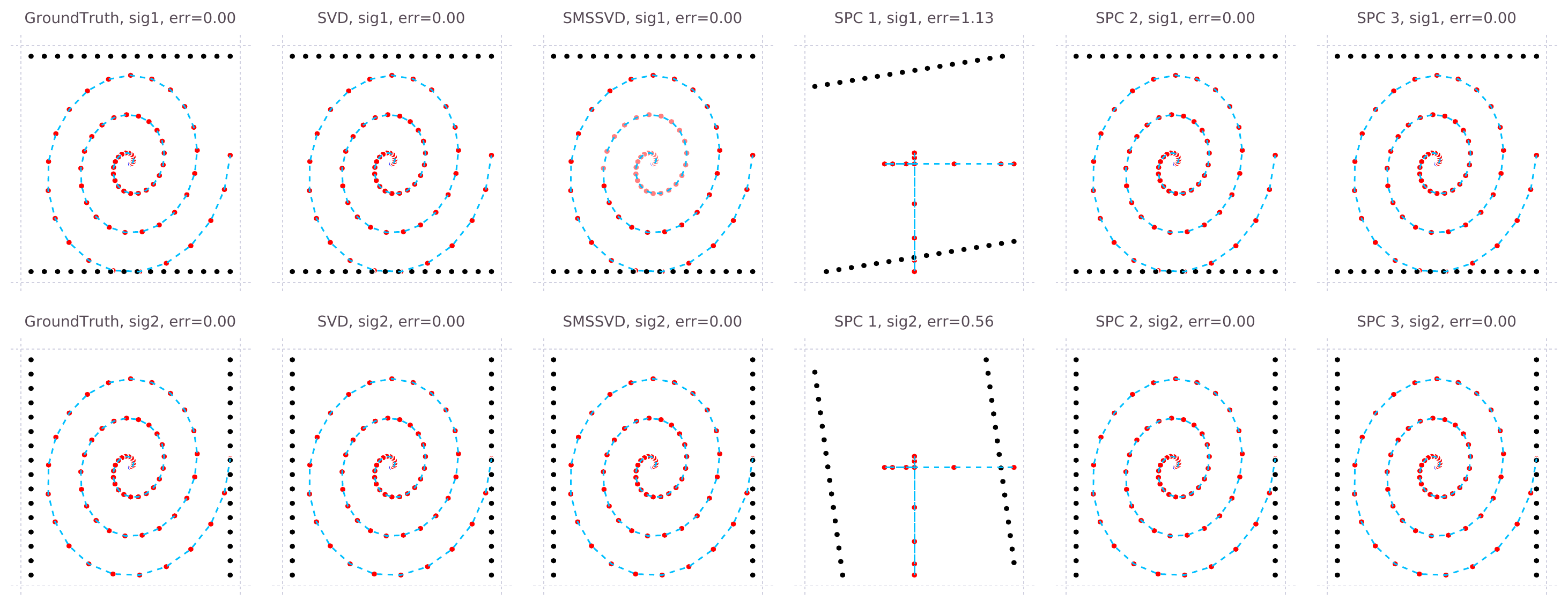}
	\vspace{3mm}
	\includegraphics[scale=0.4]{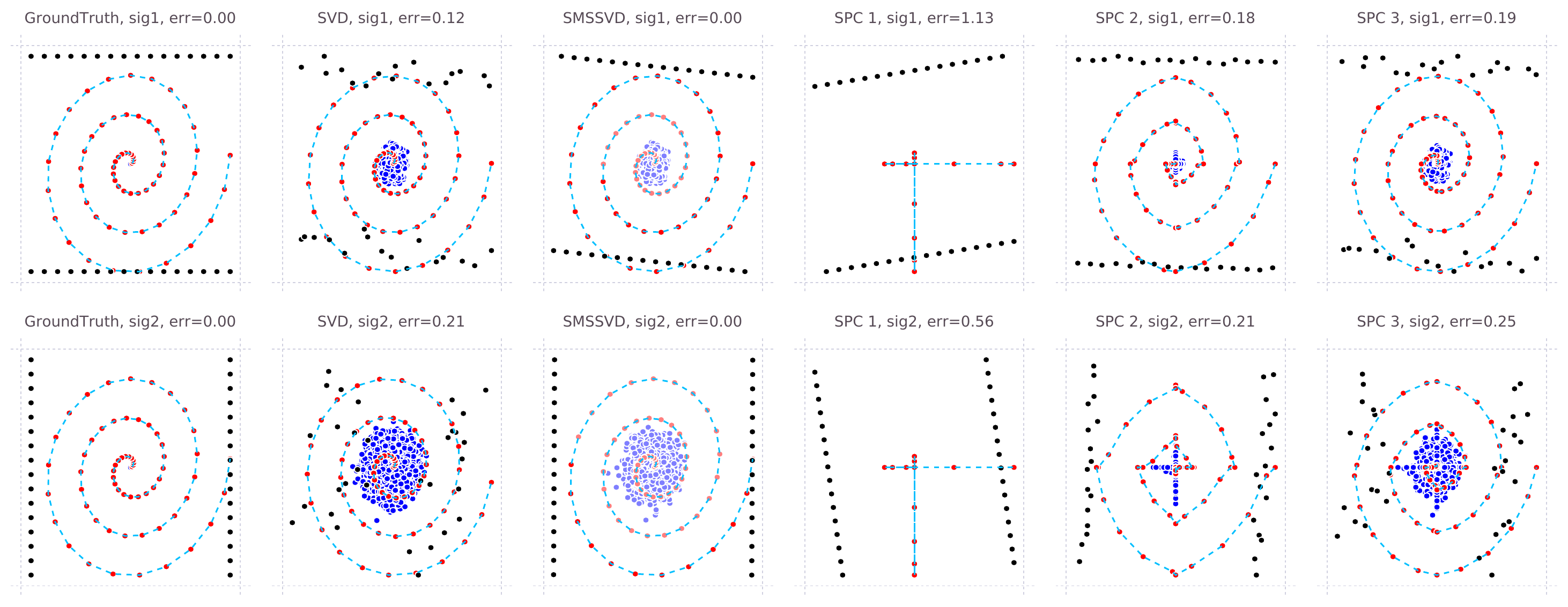}
	\vspace{3mm}
	\includegraphics[scale=0.4]{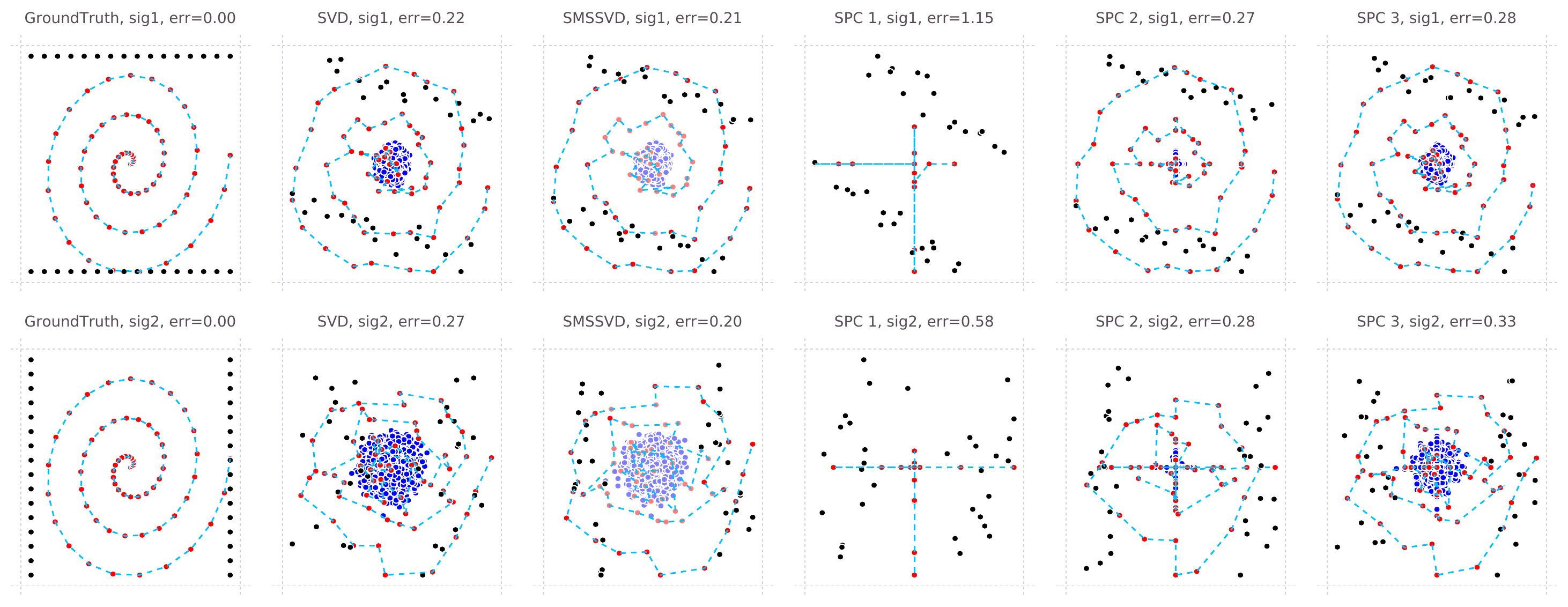}
	\caption{Two $2d$ signals with non-overlapping support for the variables are shown for no noise (Upper two rows), noise on non-signal variables only (Middle two rows) and for noise on all variables (Lower two rows). The reconstruction of the first signal is shown in the upper row and for the second signal in the lower row in each set. Different columns correspond to different methods, where SPC ``1'', ``2'' and ``3'' have regularization penalties of $c=2$, $8$ and $32$ respectively, controlling the degree of sparsity. Samples are black, variables where the signal has support are red and other variables are blue. The variables in the support are connected with dashed lines, only to make it easier to spot how the variables are influenced by noise. For SMSSVD, variables selected by optimal variance filtering are shown in full color and other variables are shown in a whiter tone. Samples and variables are both scaled to fill the axes in each biplot. $32$ samples and $5000$ variables were used, of which each signal had support in $64$ variables and the rest had noise only.}
	\label{figSMSSVDBiplots}
\end{figure*}

Next, we created several data sets for a variety of conditions based on the parameters $N=100$: Number of samples, $P$: Number of variables, $L$: Number of variables in the support of each signal, $K=8$: number of signals and $d$: the rank of each signal.
For each signal, we randomize matrices $U_k$ and $V_k$, choose a diagonal matrix $\Sigma_k$ and let $Y_k\coloneqq U_k\Sigma_kV_k^T$.
For both $V_k$ and $U_k$, each new column is created by sampling a vector of i.i.d. Gaussian random variables and projecting onto the orthogonal complement of the subspace spanned by previous columns (in current and previous signals).
For $U_k$, we only consider the subspace spanned by $L$ randomly selected variables.
The result is then expanded by inserting zeros for the other $P-L$ variables.
To complete the signal, let the $i$'th diagonal element of $\Sigma_k$, $(\Sigma_k)_{ii} \coloneqq 0.6^{k-1}0.9^{i-1}$, such that there is a decline in the power between signals and within components of each signal.
Finally, i.i.d. Gaussian noise is added to the data matrix.
\Autoref{figSMSSVDResiduals1,figSMSSVDResiduals2,figSMSSVDResiduals3} show test results for data sets randomized in this way for different sets of parameters.
SMSSVD is the only method that performs well over the whole set of parameters.
The only situation where SMSSVD is consistently outperformed is by SVD for large $L$, and it is by a narrow margin.
SMSSVD performs particularly well, in comparison to the other methods, in the difficult cases when the signal to noise ratio is low.
SPC performance clearly depends on the regularization parameter which must be chosen differently in different situations.
However, despite being a parameter-free method, SMSSVD outperforms SPC in most cases.

\begin{figure*}[!htpb]
    \centering
    \includegraphics[width=\textwidth]{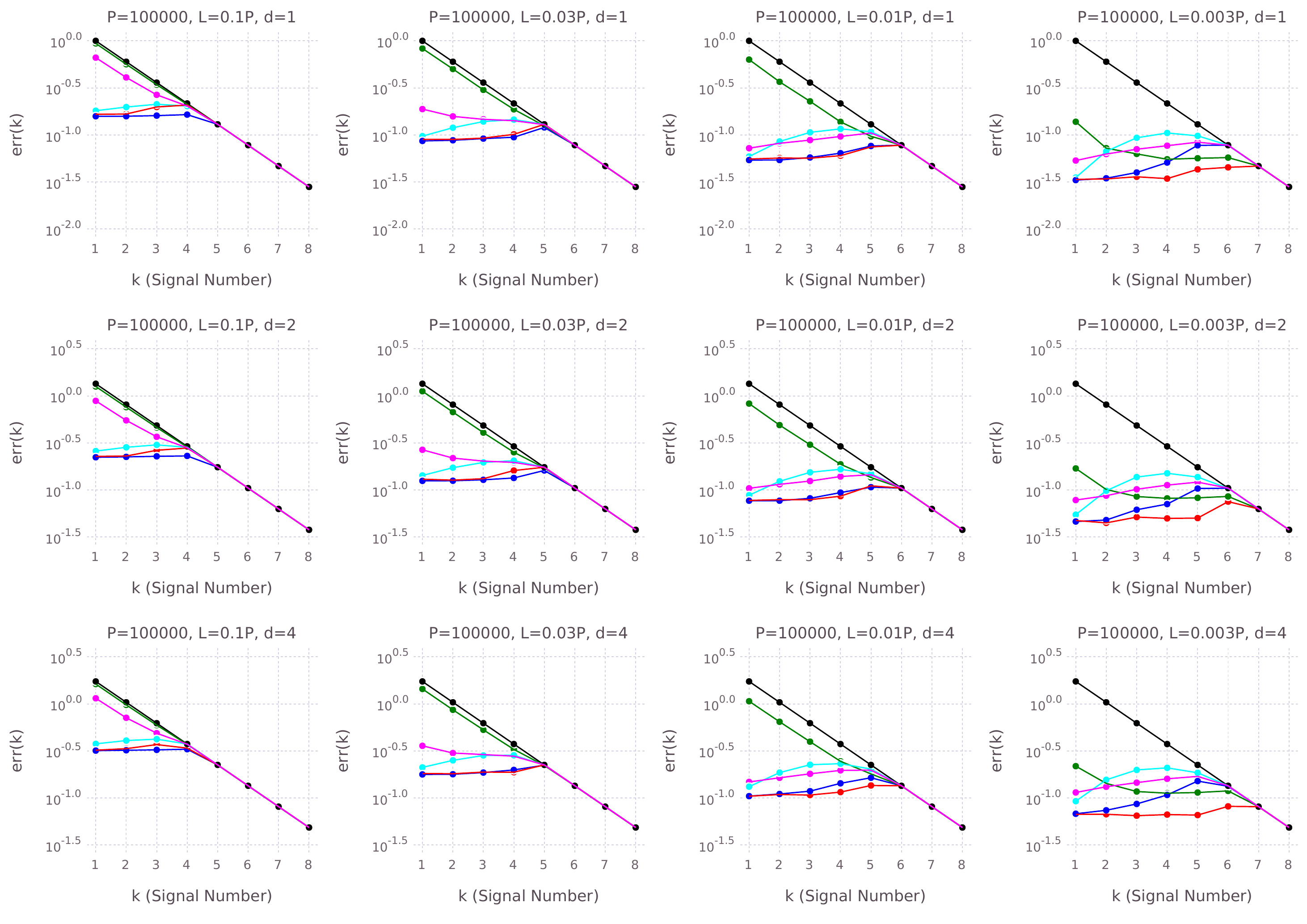}
    \caption{The reconstruction error, $\text{err}(k)$, is shown for different conditions. The signal strength $\|Y_k\|_F$ (black) is shown for scale. The methods are: SVD (blue), SMSSVD (red) and SPC (green, magenta, cyan) with decreasing degree of sparsity (regularization parameters $c=0.04\sqrt{P}$, $c=0.12\sqrt{P}$ and $c=0.36\sqrt{P}$ respectively). No errors larger than the signal strength are displayed as that indicates that a different signal has been found.}
  \label{figSMSSVDResiduals1}
\end{figure*}

\begin{figure*}[!htpb]
    \centering
    \includegraphics[width=\textwidth]{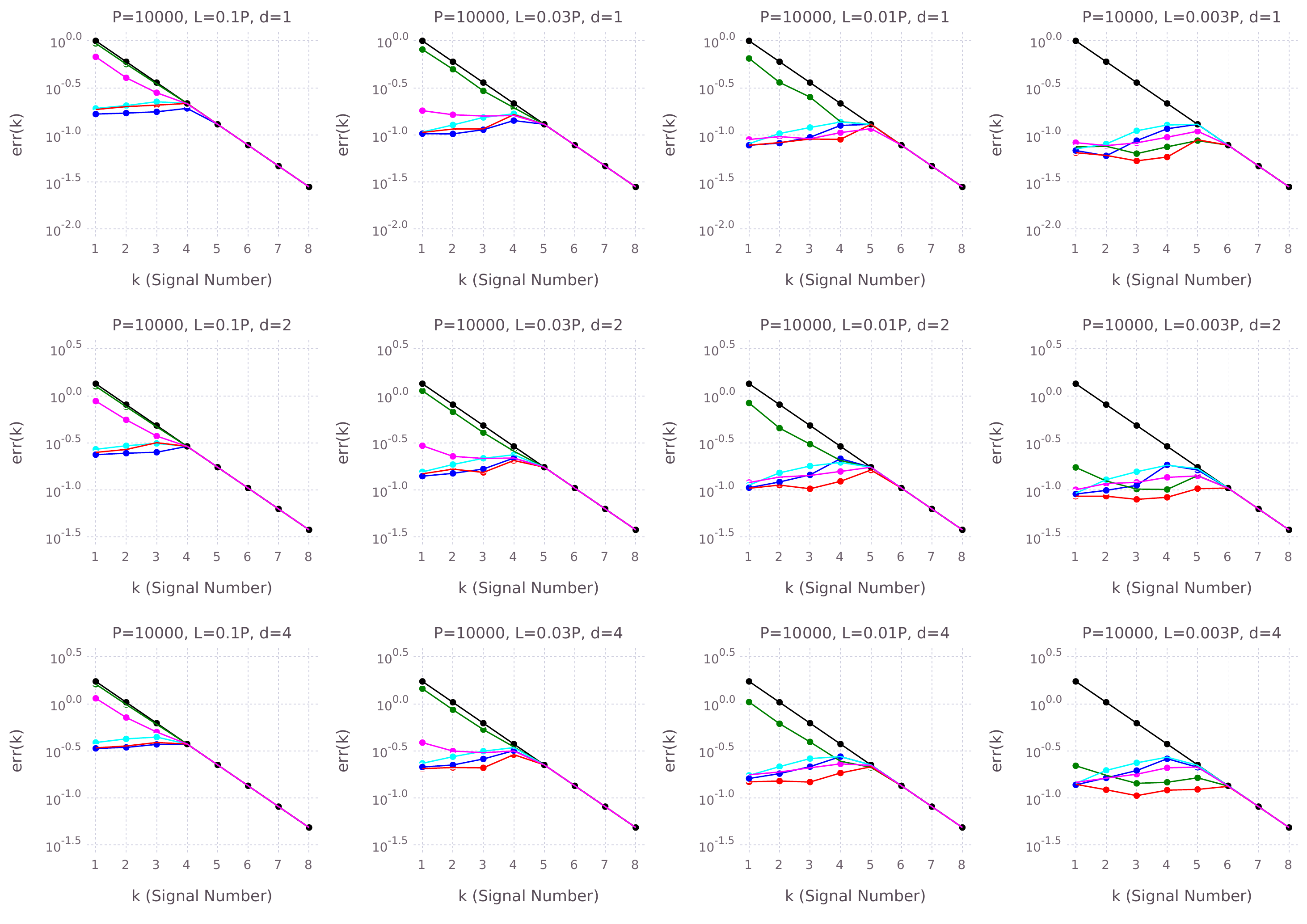}
    \caption{The reconstruction error, $\text{err}(k)$, is shown for different conditions. The signal strength $\|Y_k\|_F$ (black) is shown for scale. The methods are: SVD (blue), SMSSVD (red) and SPC (green, magenta, cyan) with decreasing degree of sparsity (regularization parameters $c=0.04\sqrt{P}$, $c=0.12\sqrt{P}$ and $c=0.36\sqrt{P}$ respectively). No errors larger than the signal strength are displayed as that indicates that a different signal has been found.}
  \label{figSMSSVDResiduals2}
\end{figure*}

\begin{figure*}[!htpb]
    \centering
    \includegraphics[width=\textwidth]{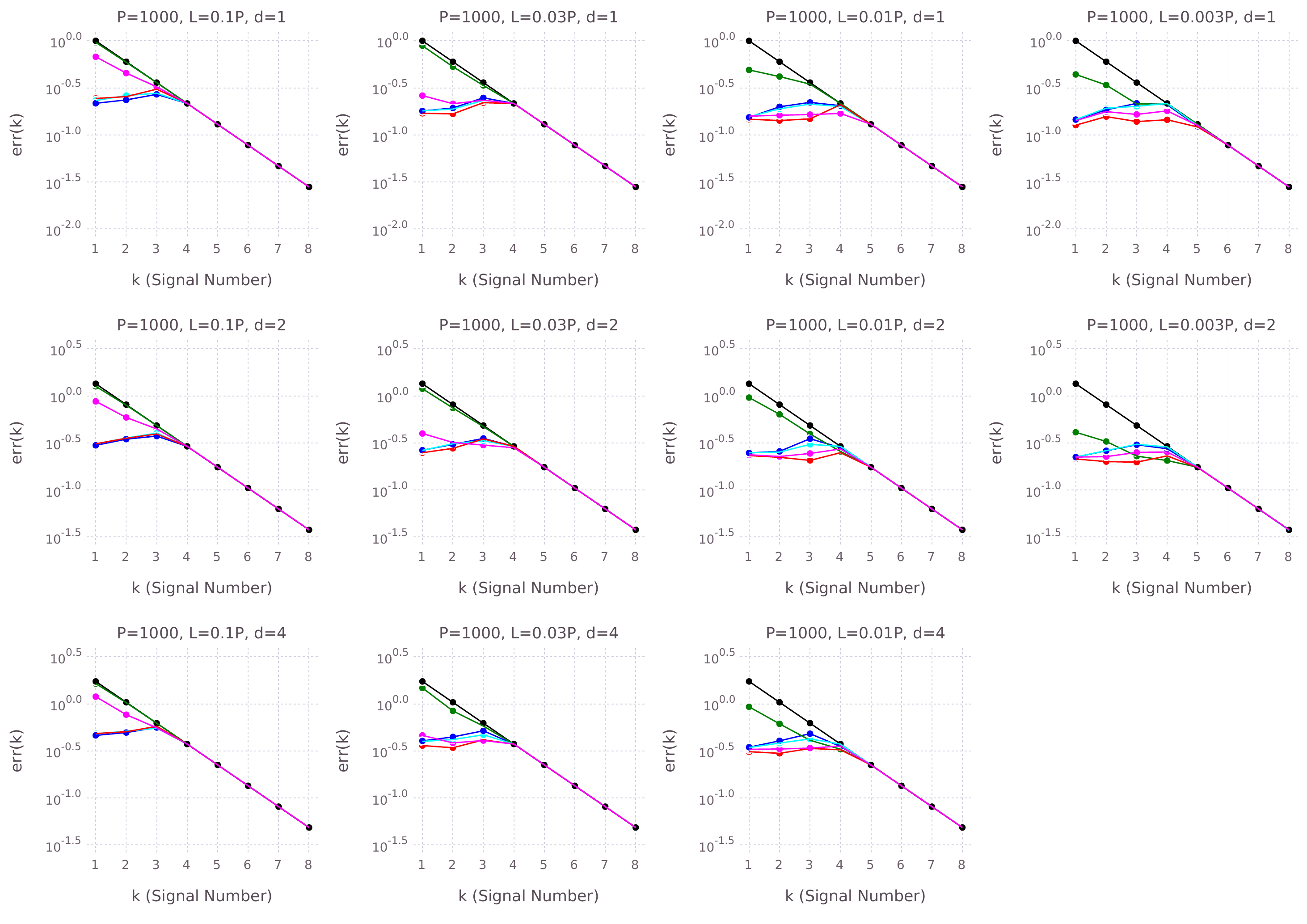}
    \caption{The reconstruction error, $\text{err}(k)$, is shown for different conditions. The signal strength $\|Y_k\|_F$ (black) is shown for scale. The methods are: SVD (blue), SMSSVD (red) and SPC (green, magenta, cyan) with decreasing degree of sparsity (regularization parameters $c=0.04\sqrt{P}$, $c=0.12\sqrt{P}$ and $c=0.36\sqrt{P}$ respectively). No errors larger than the signal strength are displayed as that indicates that a different signal has been found.}
  \label{figSMSSVDResiduals3}
\end{figure*}

\section{Discussion}
We have presented SMSSVD, a dimension reduction technique designed for complex data sets with multiple overlaid signals observed in noisy conditions.
When compared to other methods, over a wide range of conditions, SMSSVD performs equally well or better.
SMSSVD excels in situations where $P\gg N$ (many more variables than samples) but most of the variables just contribute with noise, a very common situation for high throughput biological data.
As a parameter-free method, SMSSVD requires no assumptions to be made of the level of sparsity.
Indeed, SMSSVD can handle different signals within the same data set that exhibit very different levels of sparsity.
Being parameter-free also makes SMSSVD suitable for automated pipelines, where few assumptions can be made about the data.

A common strategy when analyzing high dimensional data is to first apply PCA (SVD) to reduce the dimension to an intermediate number, high enough to give an accurate representation of the data set, but low enough to get rid of some noise and to speed up downstream computations (see e.g. \cite{maaten2008visualizing}).
We argue that since SMSSVD can recover multiple overlaid signals and adaptively reduce the noise affecting each signal so that even signals with a lower signal to noise ratio can be found, it is very useful in this situation.

Our unique contribution is that we first solve a more suitable dimension reduction problem for robustly finding signals in a data set corrupted by noise and then map the result back to the original variables.
We also show how this combination of steps gives SMSSVD many desirable properties, related to the SVD of both the full data matrix and of the smaller matrix from the variable selection step.
Orthogonality between components is one of the cornerstones of SVD, but it is often difficult to satisfy the orthogonality conditions when other factors are taken into account.
SPC does for instance give orthogonality for samples, but not for variables and the average genes of each subset in gene shaving are `reasonable' uncorrelated. 
For SMSSVD, orthogonality follows immediately from the construction, simplifying interpretation and subsequent analysis steps.
\autoref{thmSMSSVD}, property \textit{2} highlights that the variables retained in the variable selection step are unaffected when the solution is expanded to the full set of variables.
Hence, we can naturally view each signal from the point of view of the selected variables, or using all variables.

The variable selection step in the SMSSVD algorithm can be chosen freely.
For exploratory analysis, optimizing the Projection Score based on variance filtering is a natural and unbiased choice.
Another option is to use Projection Score for response related filtering, e.g. ranking the variables by the absolute value of the $t$-statistic when performing a $t$-test between two groups of samples.
The algorithm also has verbatim support for variable weighting, by choosing the $S$ matrix as a diagonal matrix with a weight for each variable.
Clearly this is a generalization of variable selection.

Kernel PCA, SPC, and other methods that give low-dimensional sample representations, but where the variable information is (partially) lost, can also be extended by SMSSVD (relying on \autoref{thmSubspaceSVD} only), as long as a linear representation in the original variables can be considered meaningful.
Apart from retrieving a variable-side representation, the SMSSVD algorithm also makes it possible to find multiple overlapping signals, by applying the dimension reduction method of interest as the first step of each SMSSVD iteration.

\section*{Acknowledgements}
The authors would like to thank Thoas Fioretos and Henrik Lilljebj\"orn at the Department of Laboratory Medicine, Division of Clinical Genetics, Lund University, for giving us access to the RNA-seq data presented in \cite{lilljebjorn2016identification}.

\printbibliography

\end{document}